\newtheorem{theorem}{Theorem}
\begin{document}
\title{A Physics-based and Data-driven Approach  for Localized  Statistical Channel Modeling}

\bibliographystyle{IEEEtran}
%\newtheorem{lemma}{Lemma}
%\vspace{-3pt}

\author{
	\IEEEauthorblockN{Shutao~Zhang\IEEEauthorrefmark{1}\IEEEauthorrefmark{2}, Xinzhi~Ning\IEEEauthorrefmark{3}\IEEEauthorrefmark{4}, Xi~Zheng
		\IEEEauthorrefmark{5}, Qingjiang~Shi\IEEEauthorrefmark{3}\IEEEauthorrefmark{2}, Tsung-Hui~Chang\IEEEauthorrefmark{1}\IEEEauthorrefmark{2}, and
	    Zhi-Quan~Luo\IEEEauthorrefmark{1}\IEEEauthorrefmark{2}\IEEEauthorrefmark{4}}
	\IEEEauthorblockA{\IEEEauthorrefmark{1}School of Science and Engineering, The Chinese University of Hong Kong, Shenzhen, China}
	\IEEEauthorblockA{\IEEEauthorrefmark{2}Shenzhen Research Institute of Big Data, Shenzhen, China}
	\IEEEauthorblockA{\IEEEauthorrefmark{3}School of Software Engineering, Tongji University, Shanghai, China}
	\IEEEauthorblockA{\IEEEauthorrefmark{4}Pengcheng Lab, Shenzhen, China}
	\IEEEauthorblockA{\IEEEauthorrefmark{5}Networking and User Experience Lab, Huawei Technologies, Shenzhen, China}
%	Email: {\{zhangsht3, zengshy25, yeflin\}@mail2.sysu.edu.cn, tangruibo@126.com, \{wupr3, xiamingh\}@mail.sysu.edu.cn}
	Email: {shutaozhang@link.cuhk.edu.cn}
	
	\thanks{The work is supported in part by the Shenzhen Science and Technology Program under Grant No. RCJC20210609104448114, and in part by the NSFC, China, under Grant No. 61731018 and 62071409, and by Guangdong Provincial Key Laboratory of Big Data Computing.}

}

%\vspace{-3pt}
\maketitle
\thispagestyle{empty}

\begin{abstract}
\noindent Localized channel modeling is crucial for offline performance optimization of 5G cellular networks, but the existing channel models are for general scenarios and do not capture local geographical structures. In this paper, we propose a novel physics-based and data-driven localized statistical channel modeling (LSCM), which is capable of sensing the physical geographical structures of the targeted cellular environment. The proposed channel modeling solely relies on the reference signal receiving power (RSRP) of the user equipment, unlike the traditional methods which use full channel impulse response matrices. The key is to build the relationship between the RSRP and the channel's angular power spectrum. Based on it, we formulate the task of channel modeling as a sparse recovery problem where the non-zero entries of the sparse vector indicate the channel paths' powers and angles of departure. A computationally efficient weighted non-negative orthogonal matching pursuit (WNOMP) algorithm is devised for solving the formulated problem.   Finally, experiments based on synthetic and real RSRP measurements are presented to examine the performance of the proposed method.  

\end{abstract}

%\begin{IEEEkeywords}
%\noindent  Network optimization, narrowband Internet-of-Things, soft combination, synchronization.
%\end{IEEEkeywords} 

%
%
\section{Introduction}
\label{Section-Introduction}
With the rapid development of the fifth generation (5G) mobile communications, cellular network optimization becomes the fundamental means to improve the quality of experience (QoE) of the users in wireless communication \cite{liu2019efficient, li2021zeroth, 9414155}. Since network optimization must be performed in the offline fashion, one of the key challenges  is how to characterize the large-scale behaviors of the wireless propagation channels between the base station and the user equipment under the network coverage. Thus, a  channel model that can capture the localized physical geographical structures of the targeted environment is paramount  for the  network optimization of the 5G systems.

Unfortunately,  the existing channel models commonly used in  the 5G systems are not applicable to  cellular network optimization.  For example,  the geometry-based stochastic models (GBSM) have been intensively studied \cite{wu2017general, wang2018survey, bian2021general}.  They are employed for some typical scenarios, such as indoor and outdoor, rural and urban scenarios, but lack the ability to describe a specific and localized environment. As for deterministic channel modeling methods like ray tracing \cite{hussain2019efficient}, they can describe the localized environment, but the complexity of using Maxwell's equations is unaffordable for cellular network optimization. Additionally, ray tracing is based on map information, which may not always be available \cite{rodriguez2020network}. 

In this paper, we  propose a physics-based and data-driven localized statistical channel modeling (LSCM) method, which is capable of sensing the physical propagation environment for the target region. The  proposed LSCM  can efficiently provide the statistical information of channel angular power spectrum (APS),  thus revealing the localized multi-path structure and topography.  In contrast to the existing GBSM which relies on channel impulse response (CIR) measurements, our proposed LSCM considers the  measurements of reference signal receiving power (RSRP) \cite{park2016analysis} from multiple transmit beams only.  Compared with the high-dimensional CIR, the benefit of using RSRP is that it can save a lot of storage space during data collection, and enable computationally efficient approaches. However, how to extract the channel statistics from the RSRP data has never been studied before and is a challenging task. 

To this end, we first build the relationship between the RSRP and the APS.  We  show later that extracting the  APS can be achieved through solving a sparse recovery problem,  where the non-zero entries of the sparse vector indicate the channel paths' powers and angles of departure. While the classical compressive sensing (CS) theory suggests that orthogonal matching pursuit (OMP) is a good solver \cite{tropp2007signal}, its convergence is guaranteed  for some well-conditioned matrices, which in general does not hold.   
On the contrary, the column magnitudes  of the coefficient matrix in the formulated problem can vary dramatically, which will significantly deteriorate the channel modeling performance. To solve this intractable sparse recovery problem,  we design the weighted non-negative OMP (WNOMP) algorithm. Simulation results based on synthetic data and real-world RSRP measurements  demonstrate that the proposed channel modeling method is effective and the WNOMP outperforms the classical methods. 

%The rest of this paper is organized as follows. Section~\ref{Section-OVERVIEW} describes the MIMO downlink system and introduce the channel measurements of RSRP  from multiple transmit beams. Section~\ref{Section-Simulation} elaborates the special coefficient matrix and the design of the proposed WNOMP algorithm. Section~\ref{Section-Results} discusses simulation results and, finally, Section~\ref{Section-Conclusions} concludes the paper.

\begin{figure}[t]
	\centering
	\includegraphics[width=3.1in, clip, keepaspectratio]{./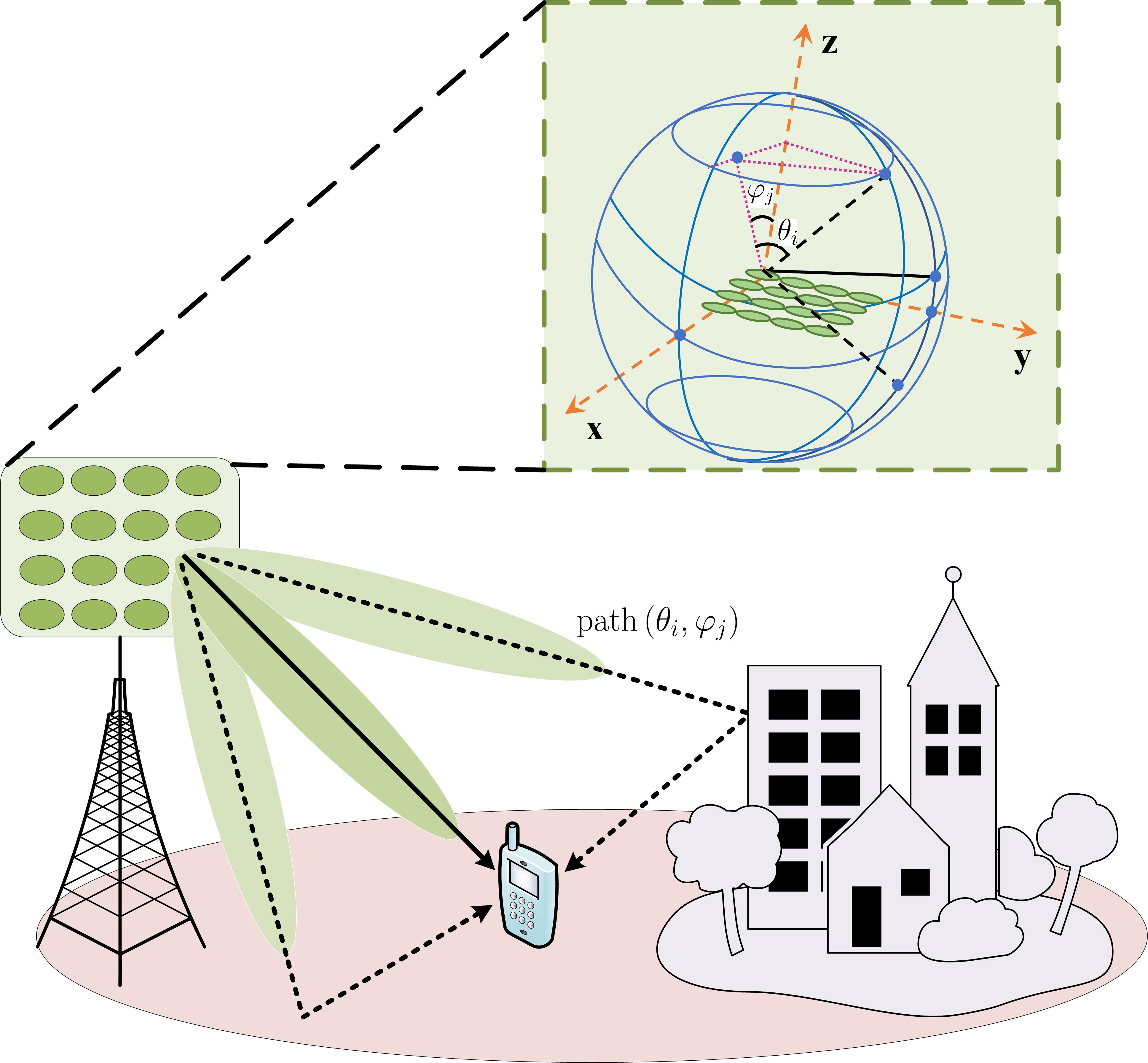}
	\vspace{-2pt}
	\caption{Angular discretization of the channel model in the downlink.}
	\vspace{-2pt}
	\label{fig_1}
\end{figure}

\section{MIMO Downlink System and RSRP}
\label{Section-OVERVIEW}

In this section, we consider the channel modeling of the massive multiple input multiple output (MIMO) downlink communication system in the target grid,  followed by the RSRP measured from multiple beams. 

\subsection{Angular discretization channel modeling}
The technique of beamforming is exploited in the massive MIMO of 5G systems, yielding high-resolution multi-path channels in the angular domain. Thus, we consider the channel modeling of the massive MIMO downlink communication system with beamforming, where the signal is transmitted from a base station to the user equipment through several propagation paths, as shown in Fig.~\ref{fig_1}.   

Similar to the 3GPP's technical report  \cite{38901},  we focus on the  tilt angle of departure (AoD),   azimuth AoD and channel gain from the base station to the user equipment with one receive antenna,  ignoring the arrival angle and delay.  Suppose the uniform rectangular  array of the base station  contains $N_T = N_x \times N_y$ antennas. The CIR  of antenna $(x, y)$ from the base station to the user equipment is given by    
\begin{align} \label{eq:hxy}
	h_{x, y} (t) = \sum_{i = 1}^{N_V} \sum_{j = 1}^{N_H} & \sqrt{\alpha_{i, j}(t) } \times g_{i, j} \times e^{-j2\pi\frac{d_x x}{\lambda}\cos \theta_i\sin \varphi_j}  \nonumber\\ & \times e^{-j2\pi\frac{d_y y }{\lambda}\sin \theta_i } \times e^{-j\omega_{i,j} (t)-j\omega_{x,y} (t)}.
\end{align}

\begin{table}[t]
	\caption{Summary of notation}
	\vspace{-1pt}
	\centering
	\includegraphics[width=3.42in, clip, keepaspectratio]{./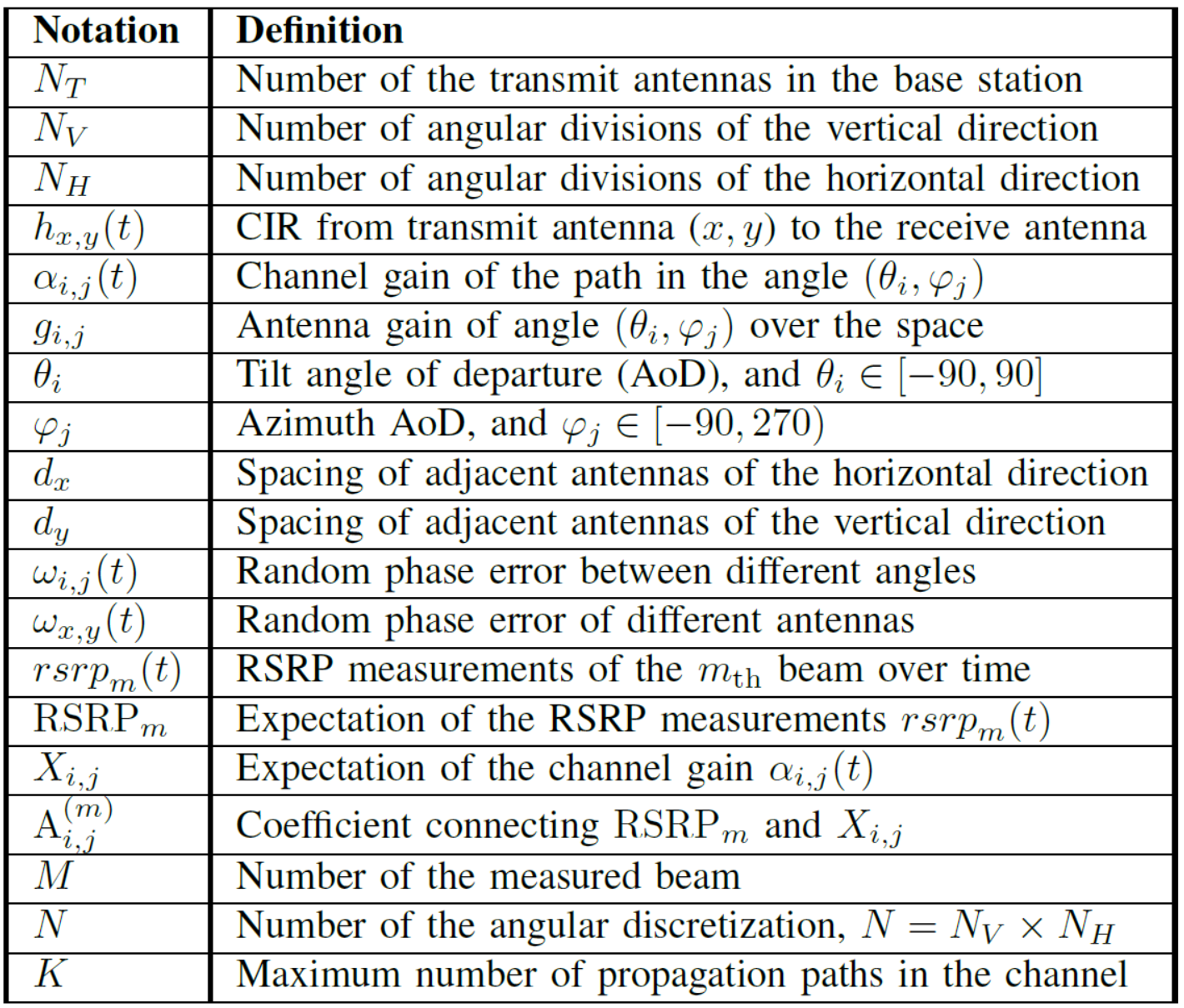}
	\label{parameter22} 
\end{table}

The definition of notation in \eqref{eq:hxy} is summarized in Table~\ref{parameter22}. The tilt AoD and azimuth AoD over the free space are discretized into $N_V$ angles and $N_H$ angles, respectively. If there does not exist a path in the angles $\left( \theta_i, \varphi_j  \right)$, the corresponding channel gain $\alpha_{i, j}(t) $ is zero, otherwise $\alpha_{i, j}(t) > 0$  for the path $\left( \theta_i, \varphi_j  \right)$ in the channel.  It is readily known that the channel gain ${\alpha_{i, j} }(t) $ is sparse, and contains multiple propagation paths in the angular domain.

The channel gain ${\alpha_{i, j} }(t) $ consists of path loss and the shadowing effect. Path loss is determined by the physical environment (distance, carrier frequency, buildings) which is assumed to be relatively static, while shadowing is caused by the obstacles and is usually modeled as the log-normal distribution.  Thus, we assume ${\alpha_{i, j} }(t) $  follows the log-normal distribution, with its mean representing  path loss and its covariance representing the shadowing effect. $\omega_{i,j} (t)$ is the random phase error between different angles caused by the  reflection, diffraction  and scattering effect of  electromagnetic waves, while $\omega_{x,y} (t)$ is the random phase error of different antennas caused by the imperfect hardware of the antenna array. Inspired by \cite{38901}, we assume  $\omega_{i,j} (t)$ follows the uniform distribution  between $-\pi$ and $\pi$, and $\omega_{x,y} (t)$ follows the Gaussian distribution with zero mean and covariance $\sigma^2$.

\begin{figure*}[!t]
	\centering
	\subfigure[RSRP = -75.92 dB]{\includegraphics[width=1.3in]{./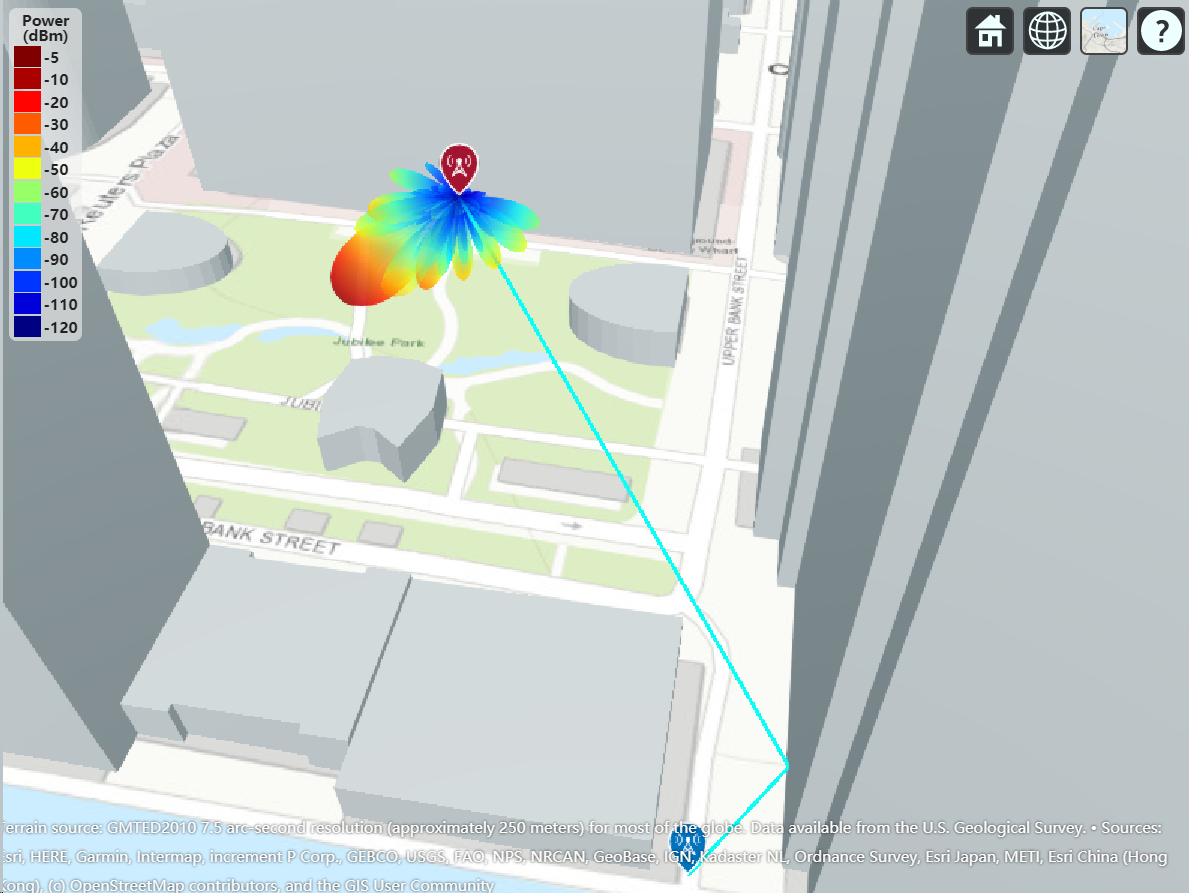}}
	\subfigure[RSRP = -73.29 dB]{\includegraphics[width=1.3in]{./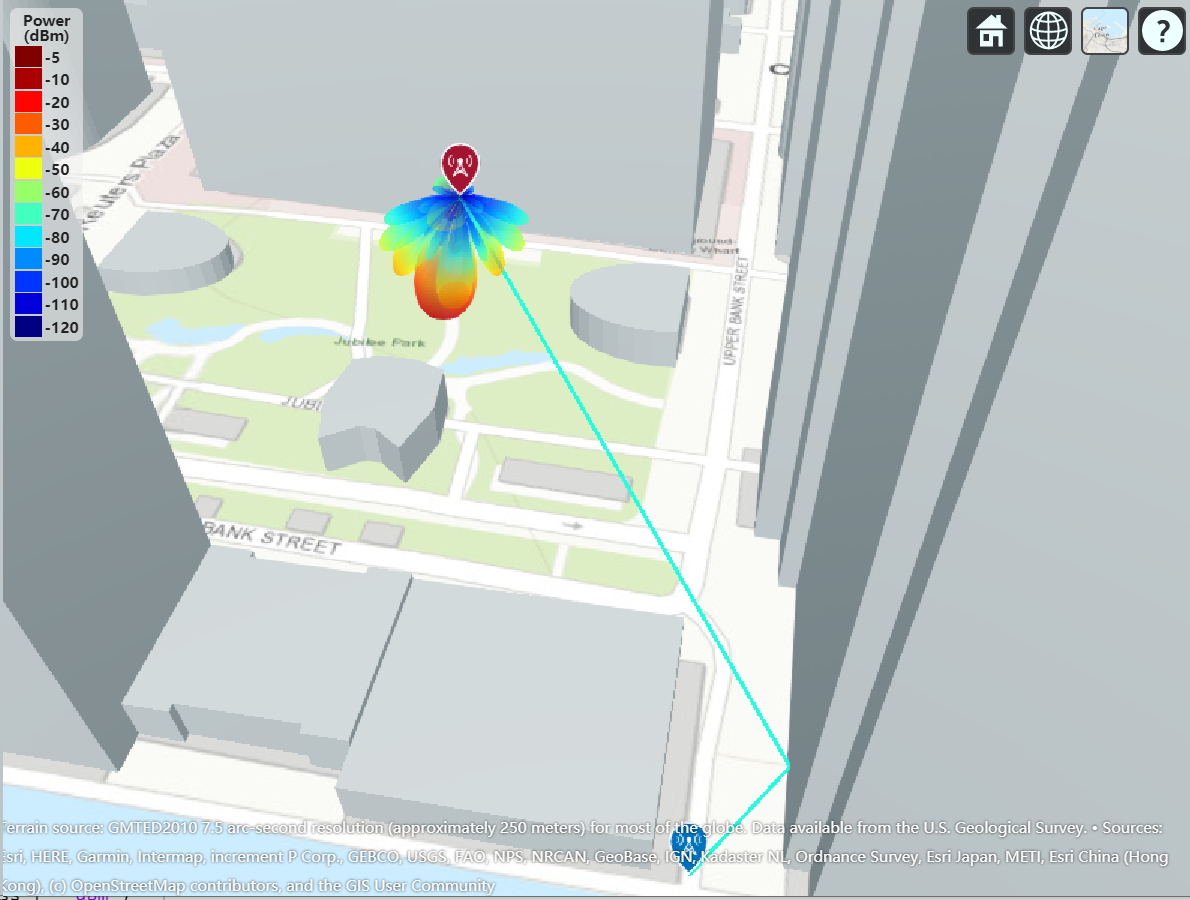}}
	\subfigure[RSRP = -64.12 dB]{\includegraphics[width=1.3in]{./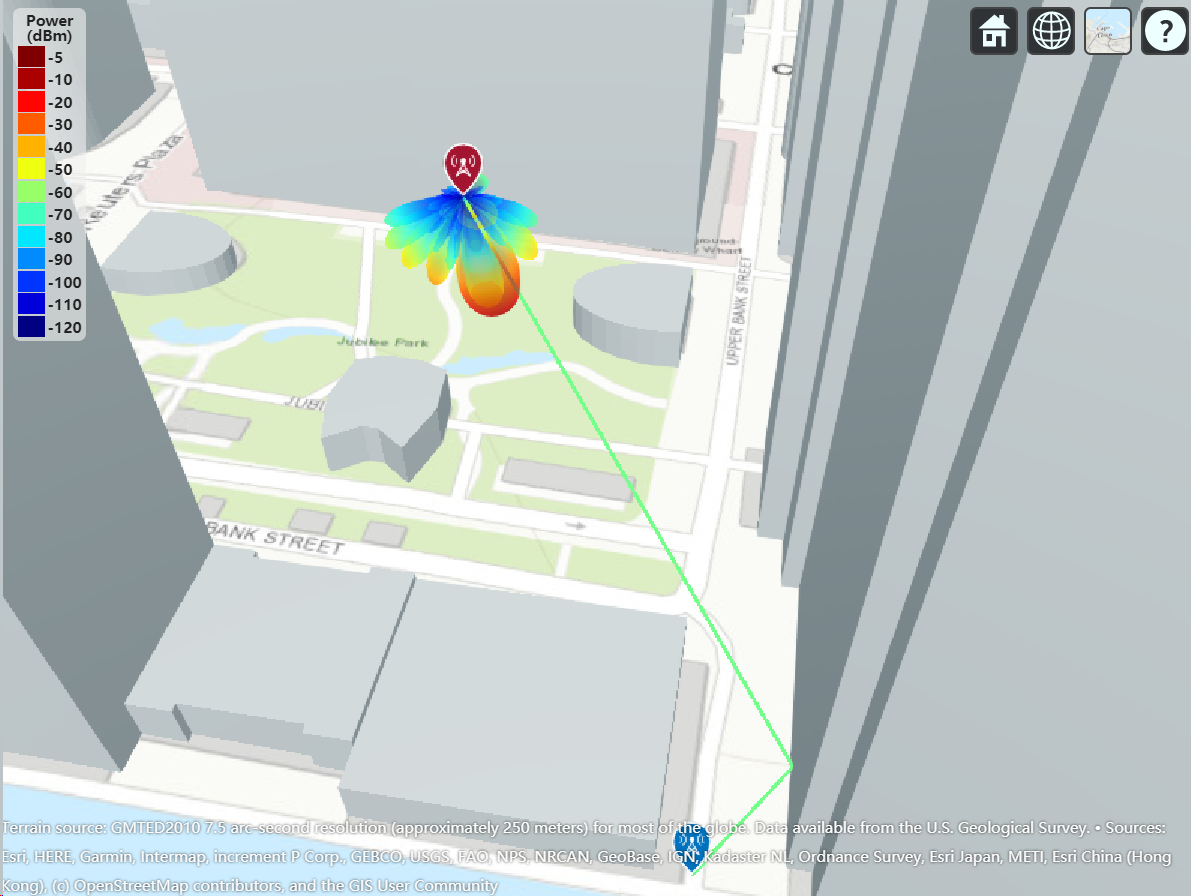}}
	\subfigure[{\bf RSRP = -59.61 dB}]{\includegraphics[width=1.3in]{./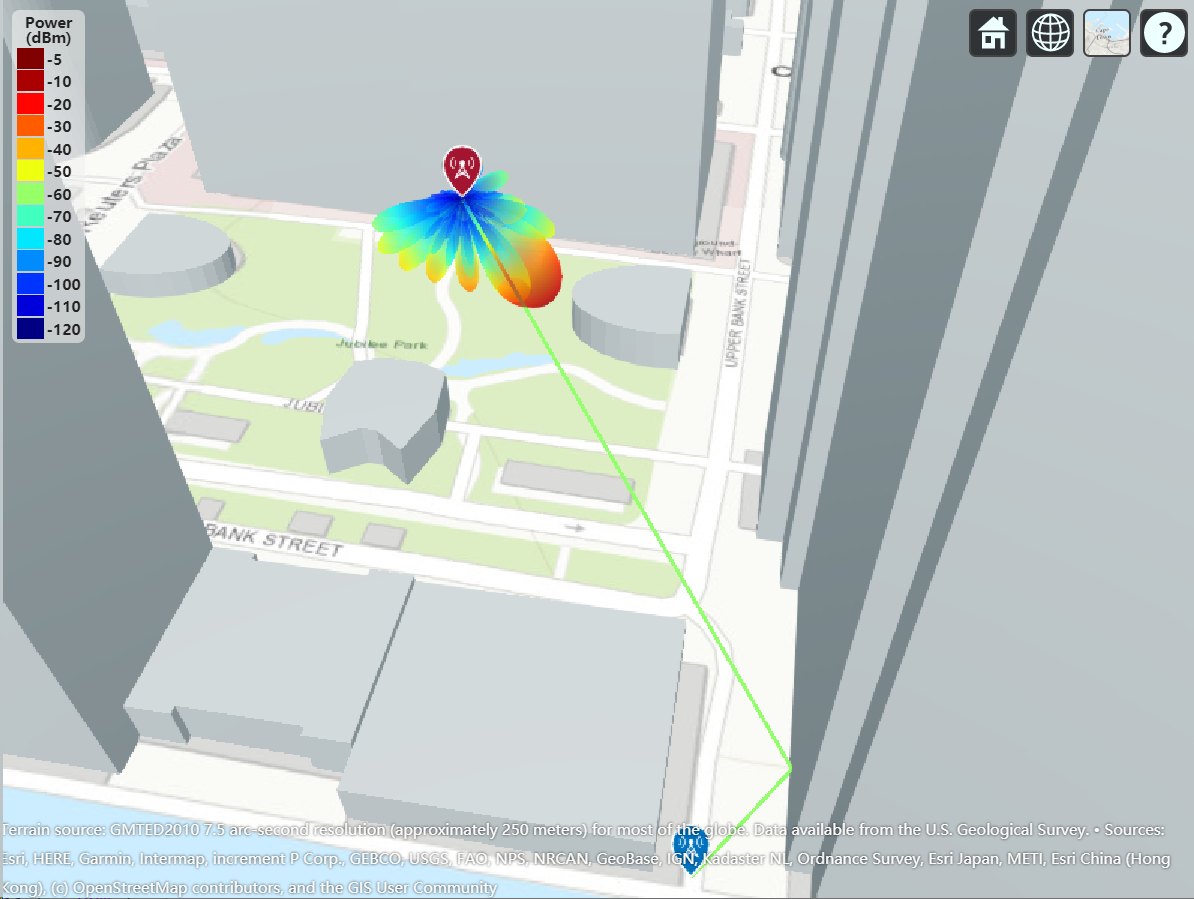}}
	\subfigure[RSRP = -70.73 dB]{\includegraphics[width=1.3in]{./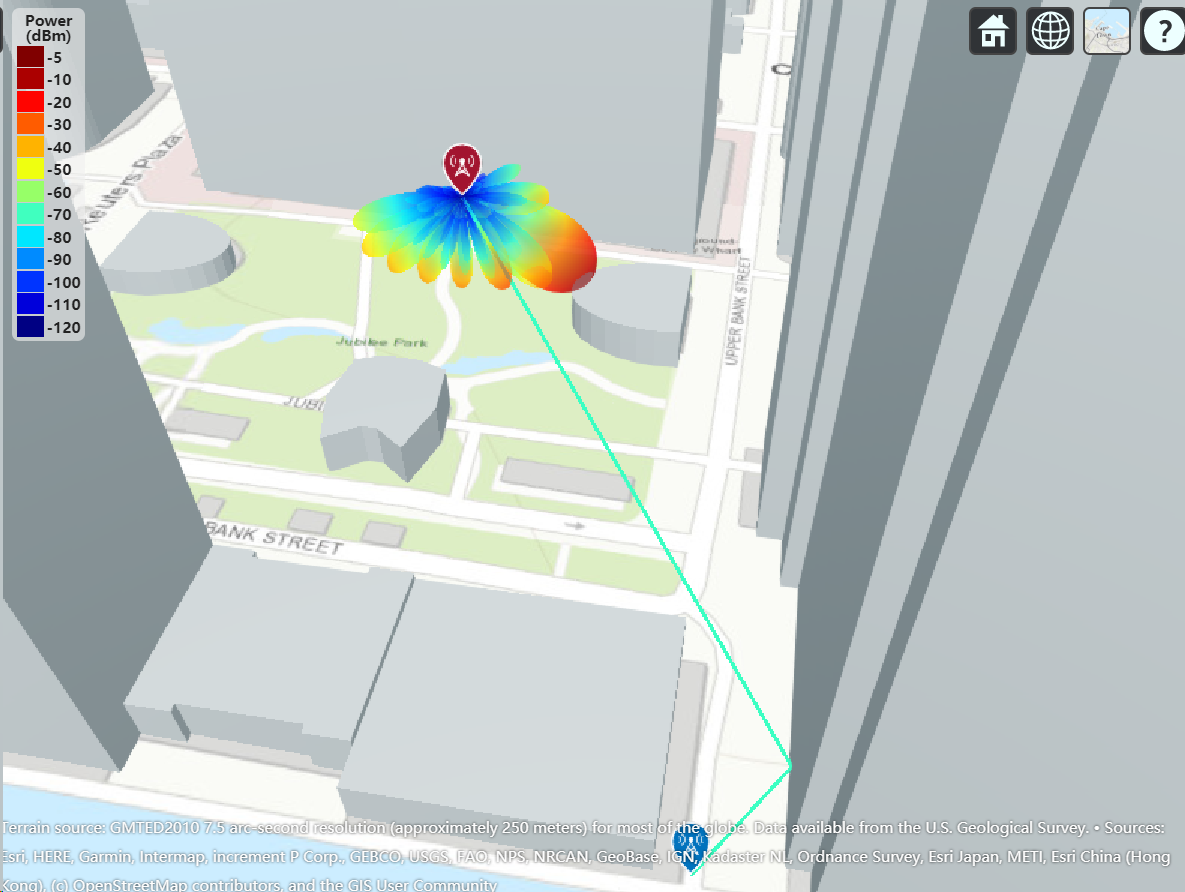}}
	\vspace{-1pt}
	\caption{An example of channel measurements of the RSRP  from five beams. There exists one propagation path between the base station and target grid, with angle of departure $\left[\widehat{\theta}, \widehat{\varphi} \right] = [-2.84, 23.82]$. The directions of main lobe of different beams are: (a)[0, -15], (b)[0, 0], (c)[0, 15], (d)[0, 30], (e)[0, 45].}
	\vspace{-1pt}
	\label{fig_0}
\end{figure*}

\subsection{RSRP measurements}

The channel measurement of our proposed LSCM is the  RSRP measured from multiple beams. In the downlink of 5G cellular systems, the RSRP can be measured from synchronization signals block (SSB) beams or channel state information-reference signal (CSI-RS) beams. Denote the precoding matrix of the $m_{\rm th}$  beam as $\boldsymbol{W}^{(m)} \in \mathbb{C}^{N_x \times N_y}$, whose size is the same as that of the antenna array. The entry of $\boldsymbol{W}^{(m)}$ is  $ \left(w_{x, y}^{(m)}\right) = \left(e^{j\phi_{x,y}^{(m)}}\right)$.   
%, where the weight  $\phi_{x,y}^{(m)} = 2\pi {x_k x \over N_x} + 2\pi {y_k y \over N_y}$, and $x_k = \left(  N_x - 1\right)/2 + k \operatorname{mod} N_x$, $y_k = \left(N_y - 1\right) / 2 + \left\lfloor k / N_x \right\rfloor$
%There are totally 8 beams of Synchronization Signals block (SSB) for  access, and 32 beams of Channel State Information-Reference Signal (CSI-RS) in the downlink.  
Denote the CIR matrix from the base station to the user equipment as $\boldsymbol{H}  \in \mathbb{C}^{N_x \times N_y}$, where the entry  of $\boldsymbol{H} $ is $ h_{x, y} (t)$ in \eqref{eq:hxy}. The RSRP of the $m_{\rm th}$  beam  at time $t$ is defined as
\begin{equation} \label{eq: rsrp}
	{rsrp}_{m}(t) = P\left|\operatorname{tr}\left(\boldsymbol{H}^{{T}} \boldsymbol{W}^{(m)}\right)\right|^{2} = P\left|\sum_{x, y} h_{x, y} (t) w_{x, y}^{(m)}\right|^{2},
\end{equation}
where $P$ denotes the transmit power.   The quality of the channel can be represented through  ${rsrp}_{m}(t)$, and a larger value of ${rsrp}_{m}(t)$ reflects  better channel quality.

Figure~\ref{fig_0} shows an example  of channel measurement of RSRP measured from five beams. These five beam patterns are generated according to different beamforming weights, and the directions of main lobe of these five beams are different. The receiver at the target grid is highlighted as a blue point.  The target grid can receive five RSRP measurements, corresponding to the received power of five beams. To be specific,  the strongest value of RSRP measurements is -59.61 dB in Fig.~\ref{fig_0}(d), because its main lobe's direction is $\left[ \bar{\theta}, \bar{\varphi}   \right] = [0, 30]$, which is close to the angle of departure of the propagation path $\left[\widehat{\theta}, \widehat{\varphi} \right] = [-2.84, 23.82]$.  

\section{Physics-based and Data-driven LSCM}
\label{Section-Simulation}
In this section, the relation between the first-order statistics of the RSRP and channel gain is  established, which is the key of the proposed LSCM. Then, the WNOMP  is designed for extracting the channel statistics of the APS from the RSRP. 

\subsection{The first-order statistics of the RSRP and channel gain}

As for  ${rsrp}_{m}(t)$, it is  a random variable  inferred from \eqref{eq: rsrp}, since $ h_{x, y} (t)$ contains three independent random variables, i.e., $\omega_{x,y} (t)$, $\omega_{i,j} (t)$ and ${\alpha_{i, j} }(t) $.  Theorem~\ref{theorem1} shows the relationship between the first-order statistics of ${rsrp}_{m}(t)$ and ${\alpha_{i, j} }(t) $, i.e., their expectation  $\mathbb{E} \left( {rsrp}_{m}(t) \right)$ and $\mathbb{E} \left( {\alpha_{i, j} }(t)  \right)$, implying the statistical relationship of the RSRP and angular information.

\begin{theorem} \label{theorem1}
	{\rm{(The relationship between the first-order statistics of RSRP and channel gain)}} 
	Suppose the expectation of ${rsrp}_{m}(t)$ is   ${\rm{RSRP}}_{m} \triangleq \mathbb{E}\left( {rsrp}_{m}(t) \right)$, and the expectation of channel gain is $X_{i,j}  \triangleq \mathbb{E}\left( {\alpha_{i, j} }(t)  \right)$, then we have 
	\begin{align}\label{eq11}
		{\rm RSRP}_{m} = \sum_{i = 1}^{N_V} \sum_{j = 1}^{N_H}  {\rm A}_{i, j}^{(m)}X_{i,j} ,
	\end{align}
	where  
	\begin{align} \label{eq: 886}
		{\rm A}_{i, j}^{(m)}   & \triangleq P g_{i, j}^{2}\Bigg( \! N_{x} N_{y}\left(1-e^{-\sigma^{2}}\right) \nonumber \\ 
		& \quad  \  + e^{-\sigma^{2}} \sum_{x,y}  \sum_{x^{\prime},y^{\prime}} \cos \left(\psi_{i, j, x, y}^{(m)}-\psi_{i, j, x^{\prime}, y^{\prime}}^{(m)} \! \right) \! \Bigg), \\
		\psi^{(m)}_{i,j,x,y} &= 2\pi {d_x x \over \lambda}\cos \theta_i \sin \varphi_j + 2\pi {d_y y \over \lambda  } \sin \theta_i - \phi_{x,y}^{(m)}.
	\end{align}
\end{theorem}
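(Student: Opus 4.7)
The plan is to substitute the expression for $h_{x,y}(t)$ from \eqref{eq:hxy} together with $w_{x,y}^{(m)}=e^{j\phi_{x,y}^{(m)}}$ into the RSRP definition \eqref{eq: rsrp}, expand the modulus squared as a double sum indexed by $(i,j,x,y)$ and $(i',j',x',y')$, and then take the expectation inside the sum. Writing $\psi^{(m)}_{i,j,x,y}=2\pi\tfrac{d_xx}{\lambda}\cos\theta_i\sin\varphi_j+2\pi\tfrac{d_yy}{\lambda}\sin\theta_i-\phi_{x,y}^{(m)}$, each product of the factor from $h_{x,y}$ with the conjugate from $\overline{h_{x',y'}}$ will combine into a term of the form $\sqrt{\alpha_{i,j}\alpha_{i',j'}}\,g_{i,j}g_{i',j'}\,e^{-j(\psi^{(m)}_{i,j,x,y}-\psi^{(m)}_{i',j',x',y'})}\,e^{-j(\omega_{i,j}-\omega_{i',j'})}\,e^{-j(\omega_{x,y}-\omega_{x',y'})}$, and the three sources of randomness ($\alpha$, $\omega_{i,j}$, $\omega_{x,y}$) are independent, so the expectation factors.

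Next I will evaluate each expectation. Because $\omega_{i,j}(t)$ is uniform on $[-\pi,\pi]$ and the different angular bins are independent, $\mathbb{E}[e^{-j(\omega_{i,j}-\omega_{i',j'})}]=\delta_{(i,j),(i',j')}$, which collapses the angular double sum to a single sum and also makes the remaining $\sqrt{\alpha_{i,j}\alpha_{i',j'}}$ become $\alpha_{i,j}$, whose expectation is $X_{i,j}$. For the antenna-phase noise, since $\omega_{x,y}-\omega_{x',y'}$ is Gaussian with mean $0$ and variance $2\sigma^2$ when $(x,y)\neq(x',y')$ and is identically $0$ when $(x,y)=(x',y')$, the Gaussian characteristic function yields
\begin{equation*}
\mathbb{E}\bigl[e^{-j(\omega_{x,y}-\omega_{x',y'})}\bigr]=\begin{cases}1,&(x,y)=(x',y'),\\ e^{-\sigma^2},&\text{otherwise}.\end{cases}
\end{equation*}

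I will then split the antenna double sum into its diagonal and off-diagonal parts by writing the above piecewise expectation as $(1-e^{-\sigma^2})\mathbf{1}_{(x,y)=(x',y')}+e^{-\sigma^2}$. The diagonal part contributes $N_xN_y(1-e^{-\sigma^2})$ since the phase difference vanishes there, and the remaining part is $e^{-\sigma^2}\sum_{x,y}\sum_{x',y'}e^{-j(\psi^{(m)}_{i,j,x,y}-\psi^{(m)}_{i,j,x',y'})}$. This latter expression is the squared modulus $|\sum_{x,y}e^{-j\psi^{(m)}_{i,j,x,y}}|^2$, which is manifestly real, so its imaginary part vanishes and only the cosine survives. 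Combining everything, and factoring out $Pg_{i,j}^2 X_{i,j}$, gives exactly the claimed formula for ${\rm A}^{(m)}_{i,j}$.

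The main obstacle is the careful bookkeeping at the step where independence is invoked: one has to justify that $\alpha_{i,j}$, $\omega_{i,j}$ and $\omega_{x,y}$ may be separated inside the expectation, and that the angular cross-terms vanish completely (not merely in modulus) because of the zero mean of $e^{\pm j\omega_{i,j}}$ under the uniform distribution. Once that is in place, the rest is an algebraic rearrangement, together with the small but essential observation that symmetry $(x,y)\leftrightarrow(x',y')$ converts the complex exponential into the real cosine appearing in \eqref{eq: 886}.
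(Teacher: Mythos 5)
Your proposal is correct and follows essentially the same route as the paper's proof: both factor the expectation using the independence of $\alpha_{i,j}(t)$, $\omega_{i,j}(t)$ and $\omega_{x,y}(t)$, annihilate the angular cross-terms $(i,j)\neq(i',j')$ via the zero mean of $e^{\pm j\omega_{i,j}}$ under the uniform distribution, and evaluate the antenna-phase expectation through the Gaussian characteristic function with a diagonal/off-diagonal split yielding the $N_xN_y(1-e^{-\sigma^2})$ and $e^{-\sigma^2}\sum\sum\cos(\cdot)$ terms. The only difference is presentational: you carry complex exponentials throughout and extract the real part at the end, whereas the paper expands $\left|\cdot\right|^2$ into squared cosine and sine sums and works with trigonometric product-to-sum identities, which is equivalent but somewhat more laborious.
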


 \begin{proof}
	Please refer to Appendix~\ref{appendix1}.
\end{proof}

The statistical relationship between the expectation of RSRP measurements ${\rm{RSRP}}_{m}$ and the first-order statistics of channel gain $X_{i,j} $ is connected by the coefficient ${\rm A}_{i, j}^{(m)}$, as revealed by  Theorem~\ref{theorem1}.  According to this finding, it is possible to extract the angular information from RSRP measurements if we can calculate $X_{i,j} $ from ${\rm{RSRP}}_{m}$. The APS can be expressed by $X_{i,j} $, where the subscripts $i$ and $j$ indicate the tilt angle $\theta_i$ and azimuth angle $\varphi_j$, respectively. 

Based on \eqref{eq11}, we reshape  $X_{i,j} $ with the subscripts $i = 1, \cdots, N_V, j = 1, \cdots, N_H$ as a vector, i.e., 
\begin{align} \label{20220424}
	{\bf x}  = \left[ X_{1,1} , X_{1,2} , \cdots, X_{1,N_H} , X_{2,1} , \cdots, X_{N_V,N_H}  \right]^T,
\end{align} 
where the length of ${\bf x} $ is $N = N_V \times N_H$.  If the path $\left( \theta_i, \varphi_j \right)$ does not exist, then   $\mathbb{E} \left( {\alpha_{i, j} }(t)  \right) = 0$.  As there are only several propagation paths  from the base station to the user equipment, the channel vector ${\bf x} $ is sparse.

Suppose there are $M$ measured beams in total,  and define ${\bf y}  \in \mathbb{R}^{M \times 1}$ as the vector of the expectation of RSRP measurements from different beams, i.e., 
\begin{align} 
	{\bf y}  = \left[{\rm RSRP}_{1}, {\rm RSRP}_{2}, \cdots, {\rm RSRP}_{m}, \cdots, {\rm RSRP}_{M}  \right]^T.
\end{align}

Define the  vector ${\bf a}^{(m)} \in \mathbb{R}^{N \times 1}$ as 
{\small
\begin{align} \label{rowvector}
	{\bf a}^{(m)} = \left[  {\rm A}_{1, 1}^{(m)},  {\rm A}_{1, 2}^{(m)}, \cdots,  {\rm A}_{1, N_H}^{(m)},  {\rm A}_{2, 1}^{(m)}, \cdots,  {\rm A}_{N_V, N_H}^{(m)} \right]^T,
\end{align}
}
and the coefficient matrix ${\bf A} \in \mathbb{R}^{M \times N}$ is
\begin{align}
	{\bf A} = \left[ {\bf a}^{(1)}, {\bf a}^{(2)}, \cdots,  {\bf a}^{(m)}, \cdots,  {\bf a}^{(M)} \right]^T.
\end{align}

Then we can recast \eqref{eq11} as 
\begin{align}	 \label{eq: problem1}
	{\bf y} = {\bf A} {\bf x}.
\end{align}

%To construct the localized statistical channel model, we want to recover the sparse channel vector $\bf x$ containing the first-order channel statistics from the expectation of RSRP measurements of multiple beams   $\bf y$ and the coefficient matrix $\bf A$.  This sparse recovery problem is intractable, due to these two special properties of  $\bf A$, namely that, columns being of non-uniform magnitude and  being close to parallel.

%

%
\subsection{The designed WNOMP  algorithm}
To construct the localized statistical channel model, we want to recover the sparse channel vector $\bf x$ from the expectation of RSRP measurements $\bf y$ and the coefficient matrix $\bf A$.  According to \eqref{eq: problem1}, as the expectation of channel gain in terms of different angles  ${ x}_{n} \geq 0,  \ n = 1, \cdots, N$, is non-negative, we can formulate the optimization problem  as
\begin{equation}\label{problem2}
	\begin{aligned}
		\ \min _{\bf x} & \  \|{\bf {Ax-y}}\|_{2}^{2}\\
		\text {s.t.} & \ \|{\bf x}\|_{0} \leq K, \\
		&\ {x}_{n} \geq 0,  \ n = 1, \cdots, N,
	\end{aligned}
\end{equation}
where $K$ is the maximum number of non-zero entries, representing the maximum number of propagation paths.

The OMP is an iterative algorithm, which selects one column into the set of indexes of non-zero entries $\mathcal{S}$ at each iteration. The selection criteria is to select the column that is best correlated with the residual of current iteration. Problem \eqref{problem2} can be solved by the non-negative OMP (NNOMP),  as shown in Algorithm~\ref{NNOMP}. The biggest difference between NNOMP and  OMP lies in  Step  5 of Algorithm~\ref{NNOMP}, where the subproblem of non-negative least squares (NNLS) is solved instead of the unconstrained least squares (ULS). After solving the NNLS subproblem, the entries of $\bf x$ in the complement set $\mathcal{S}^c$ are forced  zero to ensure the sparsity. A fast implementation of NNOMP is proposed in \cite{nguyen2019non} based on the active-set algorithm. The role of $ \max \left( {\bf A}^{T} {\bf r}_{k} \right) < 0$  is also explained in \cite{nguyen2019non} by using the Karush-Kuhn-Tucker (KKT) condition.  The condition ${\bf a}_n^T{\bf r}_k > 0$ means that the $n_{\rm th}$ column is the descending column, which can be added into the  set $\mathcal{S}$ and reduce the residual.

\begin{algorithm}[t]
	\caption{Non-negative OMP (NNOMP)}
	\begin{algorithmic}[1]\label{NNOMP}
		\REQUIRE $K, \ {\bf A} \triangleq \left\{ {\bf a}_1, {\bf a}_2, \cdots, {\bf a}_n, \cdots, {\bf a}_N  \right\}, \ {\bf y} $
		\STATE \text { Initialize } $k = 0, \  \mathcal{S} = \varnothing$, $\bf x = 0$, ${\bf r}_{0} = \bf y $
		\REPEAT
		\STATE  $ i =  \arg\max_{n}$ $ {\bf a}_n^{ T} {\bf r}_{k} $
		\STATE $\mathcal{S} = \mathcal{S} \cup \left\{ i \right\}$	
		\STATE ${\bf x}_{\mathcal{S}} \leftarrow \arg \min_{{\bf z} \geq  0} \left\Vert {\bf y} - {\bf A}_{\mathcal{S}} {\bf z} \right\Vert_2$, \ ${\bf x}_{\mathcal{S}^{\rm c}} \leftarrow \bf 0$
		\STATE $ {\bf r}_{k + 1} = {\bf y} - {\bf Ax}$
		\STATE $k = k + 1$
		\UNTIL  $ \max \left( {\bf A}^{T} {\bf r}_{k} \right) < 0$ or $\vert \mathcal{S} \vert$ equals to $K$
		\ENSURE  $ {\bf x}$
	\end{algorithmic}
\end{algorithm}

However, in our case, because our coefficient matrix   $\bf A$ is not column-normalized, this sparse recovery problem is intractable. The magnitude of each column varies significantly, which  influences the accuracy of finding the correct column that is best correlated with the residual. The NNOMP algorithm  prefers the large-magnitude columns of coefficient matrix  $\bf A$, resulting in  a low probability to select the correct index of non-zero entries.  For example, although there exists one column that is parallel to the residual, the NNOMP algorithm  will still select the wrong column with large magnitude.

To deal with this problem, we design the WNOMP algorithm, which is shown  in Algorithm \ref{ProbNNOMP}. The WNOMP algorithm  can strike a good balance between the effect of column magnitude  and correlation with the residual by setting the dynamic weight for each selection. The weight of magnitude  should be smaller than that of the correlation. If we define the column-wise normalized coefficient matrix  as
\begin{align} 
	\widehat{\bf A} = \left[{{\bf a}_{1} \over \Vert {\bf a}_{1} \Vert_2}, {{\bf a}_{2} \over \Vert {\bf a}_{2} \Vert_2}, \cdots, {{\bf a}_{n} \over \Vert {\bf a}_{n} \Vert_2}, \cdots, {{\bf a}_{N} \over \Vert {\bf a}_{N} \Vert_2} \right],
\end{align}
where ${\bf a}_{n} \in \mathbb{R}^{M\times 1}$ is the $n_{\rm th}$ column vector of coefficient matrix $\bf A$. The dynamic weight $\lambda_{k}$ can be calculated as
\begin{align}
	\lambda_{k} = {\Vert \widehat{\bf A}^T {\bf r}_{k} \Vert_2 \over \sum_{n = 1}^{N} \Vert {\bf a}_n \Vert_2}.
\end{align}

The dynamic weight $\lambda_{k}$ plays an important role in the selection of Step 4 of Algorithm \ref{ProbNNOMP}. An intuitive idea to reduce the influence of column is to use the column-normalized matrix $\widehat{\bf A}$, i.e.,  the 
selection in Step 4 of Algorithm \ref{ProbNNOMP} is modified as $ i =  \arg\max_{n} \left(\left( { {\bf a}_n / \|  {\bf a}_n \|_2} \right)^{\rm T} {\bf r}_{k} \right)$.  However, this modification is not good, because the propagation paths are more likely to exist in the angles with large antenna gain of the transmit antenna array in the base station, which means the large-magnitude column corresponds to  the non-zero entries with a large probability.  As for the selection  $ i =  \arg\max_{n} \left( { {\bf a}_n }^{\rm T} {\bf r}_{k} \right)$, it can be viewed as the multiplication of the modification case and $\|  {\bf a}_n \|_2$, where the correlation with residual contributes little to the selection because  the value of $\|  {\bf a}_n \|_2$ is much larger than $\left( { {\bf a}_n / \|  {\bf a}_n \|_2} \right)^{\rm T} {\bf r}_{k} $.  As a result, the WNOMP considers these two terms separately, and use $\lambda_{k}$ to balance them. The Step 7 of Algorithm \ref{ProbNNOMP} compresses the set $\mathcal{S}$ by updating it with the support of $\bf x$, because some components of  $\mathcal{S}$ may vanish due to the activation of the non-negative constraints.

\begin{algorithm}[t]
	\caption{Weighted non-negative OMP (WNOMP)}
	\begin{algorithmic}[1]\label{ProbNNOMP}
		\REQUIRE $K, \ {\bf A} \triangleq \left\{ {\bf a}_1, {\bf a}_2, \cdots, {\bf a}_n, \cdots, {\bf a}_N  \right\}, \ \widehat{\bf A}, \ {\bf y} $
		\STATE \text { Initialize } $k = 0, \  \mathcal{S} = \varnothing$, $\bf x = 0$, ${\bf r}_{0} = \bf y $
		\REPEAT
		\STATE  $\lambda_{k} = { \left( \Vert \widehat{\bf A}^T {\bf r}_{k} \Vert_2  \right) /  \left( \sum_{n = 1}^{N} \Vert {\bf a}_n \Vert_2 \right)}$
		\STATE  $ i =  \arg\max_{n}$ $\left(\left( { {\bf a}_n / \|  {\bf a}_n \|_2} \right)^{\rm T} {\bf r}_{k} + \lambda_{k} \|  {\bf a}_n \|_2 \right)$
		\STATE $\mathcal{S} = \mathcal{S} \cup \left\{ i \right\}$	
		\STATE ${\bf x}_{\mathcal{S}} \leftarrow \arg \min_{{\bf z} \geq  0} \left\Vert {\bf y} - {\bf A}_{\mathcal{S}} {\bf z} \right\Vert_2$, \ ${\bf x}_{\mathcal{S}^{\rm c}} \leftarrow \bf 0$
	    \STATE  $\mathcal{S} = \operatorname{supp}\left( \bf x\right)$
		\STATE $ {\bf r}_{k + 1} = {\bf y} - {\bf Ax}$
		\STATE $k = k + 1$
		\UNTIL  $ \max \left( {\bf A}^{T} {\bf r}_{k} \right) < 0$ or $\vert \mathcal{S} \vert$ equals to $K$
		\ENSURE  $ {\bf x}$
	\end{algorithmic}
\end{algorithm}

\begin{figure}[t]
	\centering
	\includegraphics[width=2.86in, clip, keepaspectratio]{./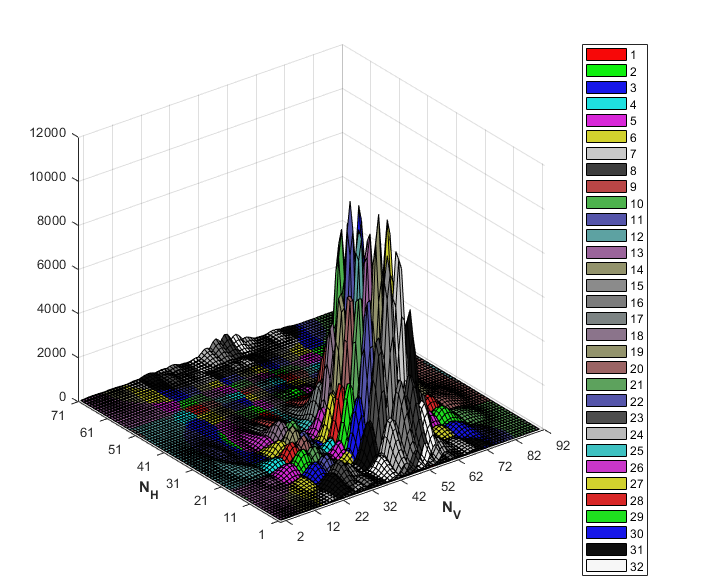}
	\vspace{-5pt}
	\caption{The magnitude of each column of coefficient matrix $\bf A$.}
	\vspace{-5pt}
	\label{fig_2}
\end{figure}

\begin{figure*}[t]
	\centering
	\subfigure[Accuracy versus \# of angular discretization $N$]{\includegraphics[width=2.34 in]{./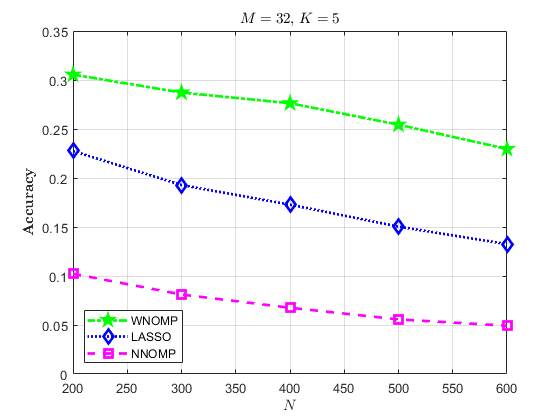}}
	\subfigure[Accuracy versus \# of measured beams $M$]{\includegraphics[width=2.34 in]{./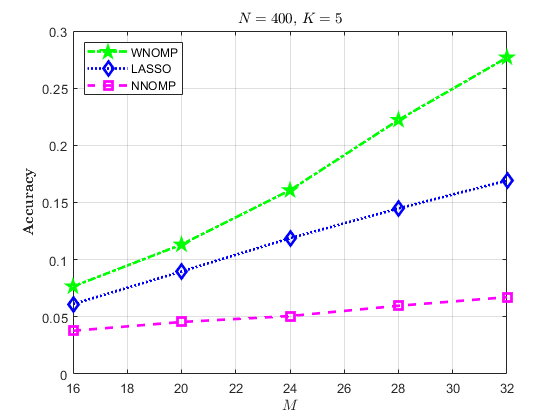}}
	\subfigure[Accuracy versus \# of non-zero entries $K$]{\includegraphics[width=2.34 in]{./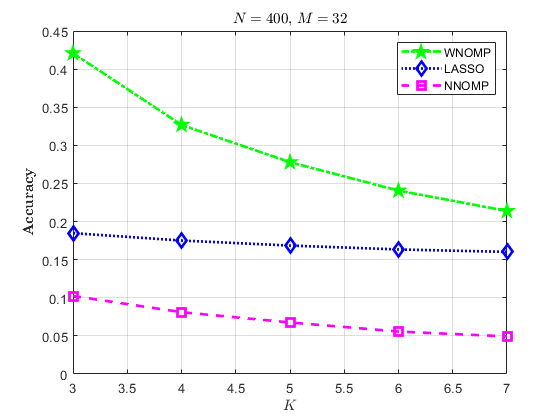}}
	\vspace{-2pt}
	\caption{Simulation results of synthetic data with respect to the performance of accuracy.}
	\vspace{-2pt}
	\label{fig_000}
\end{figure*}

\section{Simulation Results and Discussions}
\label{Section-Results}

In the following, we show numerical results to analyze the performance of  the  algorithms proposed in this paper.  To generate $\bf A$,
 we  set the interval of the tilt angle to be two degrees, and  the interval of the azimuth angle to be five degrees, as shown in  Fig.~\ref{fig_2}. Due to the  antenna gain  $g_{i,j}$, the magnitude (i.e., $l_2$ norm) of each column in $\bf A$ varies significantly, making problem~\eqref{problem2} intractable.  Both synthetic  and real-world measurement data are considered.

%\begin{table}[t]
%	\caption{System Parameters }
%	\vspace{2pt}
%	\centering
%	\begin{tabular}{!{\vrule width1.2pt}c!{\vrule width1.2pt}c!{\vrule width1.2pt}}		
%		\hline
%		\textbf {Parameters}   & \textbf  {Value}  \\
%		\hline
%		Carrier frequency      & $2.6$ GHz         \\
%		\hline
%		Bandwidth              & $100$ MHz          \\
%		\hline
%		Number of Tx/Rx antennas  & $32$/$1$        \\
%		\hline
%		Number of CSI-RS/SSB beams  & $32$/$8$      \\
%		\hline 
%	\end{tabular}
%	\label{parameter} 
%\end{table}

%\begin{figure}[!t]
%	\centering
%	\includegraphics[width=3.1in, clip, keepaspectratio]{./figures/2022051601.png}
%	\caption{Simulation results of synthetic data versus different $N$.}
%	\label{fig_3}
%\end{figure}
%
%
%\begin{figure}[!t]
%	\centering
%	\includegraphics[width=3.1in, clip, keepaspectratio]{./figures/}
%	\caption{Simulation results of synthetic data versus different $M$.}
%	\label{fig_4}
%\end{figure}
%
%\begin{figure}[!t]
%	\centering
%	\includegraphics[width=3.1in, clip, keepaspectratio]{./figures/2022051603.png}
%	\caption{Simulation results of synthetic data versus different $K$.}
%	\label{fig_5}
%\end{figure}

For synthetic data,   the location of non-zero entries of synthetic data are generated following the uniform distribution in the interval  $\left[ 1, \ N\right]$.  
The accuracy is defined as the ratio of the number of correctly recovered non-zero entries to the total number of non-zero entries, i.e., the success of support recovery.  Besides the NNOMP and the proposed WNOMP  for the $l_0$ norm constraint, we also consider the least absolute shrinkage and selection operator (LASSO) for sparse recovery, which uses $l_1$ regularization and can be solved by proximal gradient descent.
The performance of the proposed algorithm is related to the problem size, i.e., $N$, $M$ and $K$.  As shown in Fig.~\ref{fig_000}(a), we fix $M$ = 32 and $K$ = 5,  the accuracy decreases when $N$ increases because the algorithms may select the wrong column with more choices. Here, we change the value of $N$ by picking the columns with the top $N$ highest magnitude. The NNOMP  performs worst, while WNOMP  performs  better than LASSO. As shown in Fig.~\ref{fig_000}(b), the accuracy  increases  when $M$ increases for $N$ = 400 and $K$ = 5, which indicates that the channel model is more accurate with more beam measurements.   As shown in Fig.~\ref{fig_000}(c), the accuracy decreases when $K$ increases for $N = 400$ and $M = 32$.

\begin{figure}[t]
	\centering
	\includegraphics[width=2.52in, clip, keepaspectratio]{./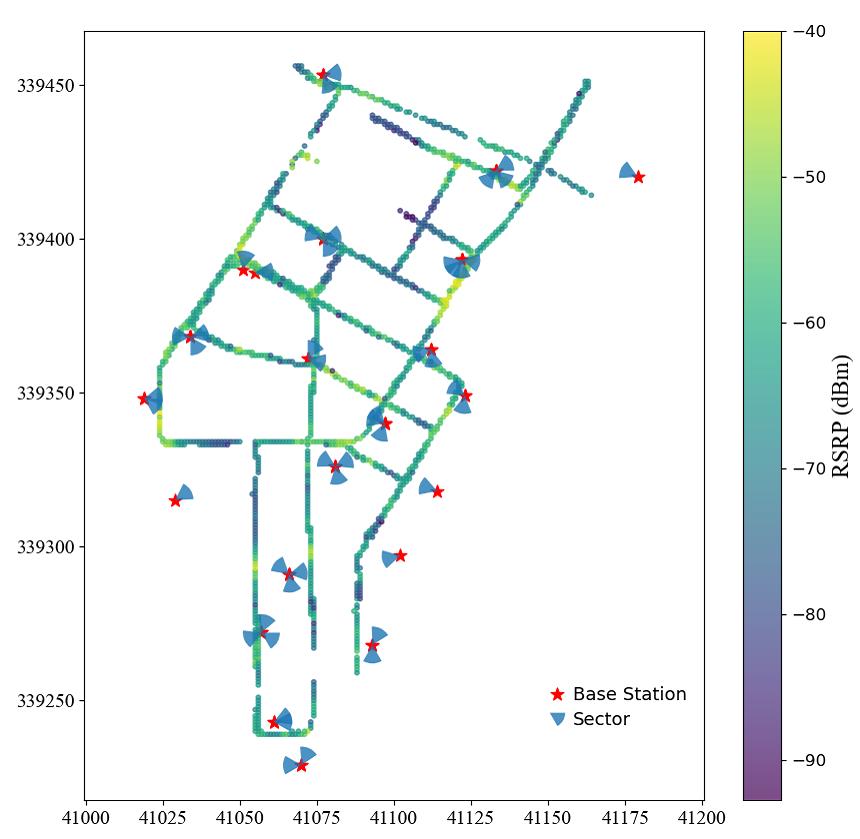}
	\vspace{-2pt}
	\caption{RSRP distribution  in real-world drive testing.}
	\vspace{-2pt}
	\label{fig_22}
\end{figure}

 As for the real-world measurement data, they are collected by using the drive-test in street of the  Chengdu city, China, as shown in Fig.~\ref{fig_22}. The carrier frequency  is $2.6$ GHz with $100$ MHz  bandwidth. There are $N_T = 32$ transmit antennas at the base station.  Depending on the velocity of the driving car,  dozens of samples are collected when passing through a $10 \times 10$ meters grid in the interval of several seconds.  For each grid,  the user equipment can measure the RSRP of the CSI-RS beams (32 beams in total) and SSB beams (8 beams in total) transmitted  from its serving cell and other neighborhood cells. In general, the number of the measured beams in the serving cell is more than the number of  the measured beams in the neighborhood cells. Based on ${rsrp}_{m}(t)$,  we obtain the  RSRP by averaging the samples, then use  them and the coefficient matrix $\bf A$ to formulate and solve the problem~\eqref{problem2}.

{\small
	%$\rightarrow$ LASSO
	{\footnotesize  
		\begin{table}[t]
			\vspace{-1pt}
			\caption{MAE performance of predicting the SSB beams \newline (LASSO $\rightarrow$ NNOMP  $\rightarrow$ WNOMP)}
			\vspace{-1pt}
			\centering
			\begin{tabular}{!{\vrule width1.2pt}l!{\vrule width1.2pt}c!{\vrule width1.2pt}c!{\vrule width1.2pt}}  
				\hline 
				& \textbf  {\# of grids}   &  \textbf {MAE (dB)}  \\
				\hline
				{Serving cell}   & 718 & 6.14 $\rightarrow$ 4.45 $\rightarrow$ 4.93 \\
				\hline
				{Neighborhood }   & 1566 & 7.32 $\rightarrow$ 5.85 $\rightarrow$ 5.58 \\
				\hline 
				All grids & 2284 & {\bf \textcolor{red}{6.95}} $\rightarrow$ {\bf \textcolor{red}{5.41}} $\rightarrow$ {\bf \textcolor{red}{5.38}}\\
				\hline
			\end{tabular}
			\label{parameter333}
	\end{table}}\vspace{-1pt}
	
	{\footnotesize 
		\begin{table}[t]
			\vspace{-1pt}
			\caption{MAE performance of predicting the CSI-RS beams \newline (LASSO $\rightarrow$ NNOMP  $\rightarrow$ WNOMP)}
			\vspace{-1pt}
			\centering
			\begin{tabular}{!{\vrule width1.2pt}l!{\vrule width1.2pt}c!{\vrule width1.2pt}c!{\vrule width1.2pt}}  
				\hline 
				& \textbf  {\# of grids}   &  \textbf {MAE (dB)}  \\
				\hline
				{Serving cell}  & 631 & 6.07  $\rightarrow$ 5.37 $\rightarrow$  4.89\\ 
				\hline
				{Neighborhood }  & 1055 & 11.5 $\rightarrow$ 10.6 $\rightarrow$ 8.96 \\
				\hline 
				All grids & 1686 & {\bf \textcolor{red}{9.44}} $\rightarrow$ {\bf \textcolor{red}{8.68}} $\rightarrow$ {\bf \textcolor{red}{7.44}}  \\
				\hline
			\end{tabular}
			\label{parameter444} 
	\end{table}}    
}\vspace{-3pt}

In particular, to evaluate the proposed channel modeling method in practical systems, we collect the RSRP measurements both before and after rotating the antenna array for certain degrees.  The former RSRP data is used for channel modeling by solving  \eqref{problem2} and the solution denoted by $\widehat{\bf x}$. Then we construct the coefficient matrix for the array after rotation, denoted by $\widehat{\bf A}$. The estimated RSRP after array rotation is thus given by $\widehat{\bf y} = \widehat{\bf A}\widehat{\bf x}$. Suppose the RSRP  after rotating as $\widetilde{\bf y}$, we then use the following mean absolute error (MAE)
\begin{align}
	 {\rm MAE} = {1 \over M}\| \widehat{\bf y} - \widetilde{\bf y}  \|_1
\end{align}
as the performance measure.

As shown in Table~\ref{parameter333} of predicting SSB beams, we consider 2284 grids in total and calculate the average MAE.  The MAE of serving cell is better than that of neighborhood cells due to more beam measurements in serving cell. 
The MAE from 1686 grids of predicting CSI-RS beam is shown in  Table~\ref{parameter444}, which is  worse than that of the prediction of  SSB beam, because the beam pattern of CSI-RS is more complex. Although LASSO performs better than the NNOMP in the synthetic data, its performance deteriorates in the real-world measurement data, where the paths does not follow the uniform distribution.
 The MAE performance of the proposed WNOMP is better than those of LASSO and NNOMP algorithms, which means that the channel model established by WNOMP is more accurate than the channel models established by LASSO and NNOMP.

\section{Conclusions}
\label{Section-Conclusions}
In this paper, we propose the LSCM to sense the physical environment that is localized for the specific geographical location.  The proposed LSCM can generate a high-resolution multi-path channel in the angular domain by exploiting the first-order statistics of RSRP measured from multiple  beams. The statistical distribution of the APS in the  LSCM is consistent with that of  true wireless propagation environment. To construct the  LSCM, we design the WNOMP algorithm to deal with the coefficient matrix without column-wise normalization  at a  low computational complexity.  Simulation results verify  the WNOMP  algorithm outperforms the classic LASSO and NNOMP algorithms in terms of accuracy and MAE. Our future work will focus on the joint estimation of the APS by taking advantage of the spatial consistency  in multiple grids \cite{xinzhining}.

{
\appendices
\section{Proof of Theorem~\ref{theorem1} } \label{appendix1}

We want to  establish the relationship between $\mathbb{E}\left( {rsrp}_{m}(t) \right)$ and $ \mathbb{E}\left( {\alpha_{i, j} }(t)  \right)$. We have
\begin{align} 
	\mathbb{E}\left( {rsrp}_{m}(t) \right) &= \mathbb{E}_{\alpha_{i, j} (t) , \omega_{i,j} (t), \omega_{x,y} (t)}\left( {rsrp}_{m}(t) \right) \nonumber\\
	&= \mathbb{E}_{\alpha_{i, j} (t) }\left( \mathbb{E}_{\omega_{i,j} (t), \omega_{x,y} (t)} \left( {rsrp}_{m}(t) \right) \right)\nonumber\\
	&= \mathbb{E}_{\alpha_{i, j} (t) }\left(  {RSRP}_{m}(t) \right), \label{equation1886}
\end{align}
where ${RSRP}_{m}(t)$ is the expectation of ${rsrp}_{m}(t)$ w.r.t. both $\omega_{x,y} (t)$ and $\omega_{i,j} (t)$.  Then, we need to show how ${RSRP}_{m}(t)$ is expressed by $\alpha_{i, j} (t)$. According to the definition in \eqref{eq: rsrp},
{ \small 
	\begin{align}  \nonumber
		& {rsrp}_{m}(t)  = P\left|\sum_{x, y} h_{x, y} (t) w_{x, y}^{(m)}\right|^{2} \\ 
		& = P \! \left( \! \sum_{i = 1}^{N_V} \sum_{j = 1}^{N_H}  \sqrt{\alpha_{i, j} (t)}  g_{i, j} \sum_{x, y}   \cos\left( \! \psi^{(m)}_{i,j,x,y} + \left( \! {\omega_{i,j} (t)} + {\omega_{x,y} (t)} \! \right) \! \right) \! \right)^2 \nonumber \\ &\ \quad + P\left(  \sum_{i = 1}^{N_V} \sum_{j = 1}^{N_H} \sqrt{\alpha_{i, j} (t)}  g_{i, j} \sum_{x, y}   \sin\left( \psi^{(m)}_{i,j,x,y} + \left(   {\omega_{i,j} (t)} + {\omega_{x,y} (t)} \!  \right) \!  \right) \! \right)^2, 
		\label{eq: rsrp22}
	\end{align}
}where $\psi^{(m)}_{i,j,x,y} = 2\pi {d_x x \over \lambda}\cos \theta_i \sin \varphi_j + 2\pi {d_y y \over \lambda  } \sin \theta_i - \phi_{x,y}^{(m)}$. 

The expectation of first term and second term in \eqref{eq: rsrp22} w.r.t. ${\omega_{i,j} (t)}$ and $ {\omega_{x,y} (t)}$ are equal. Then,  we have
{\small
	\begin{align}\label{1789}
		&{RSRP}_{m}(t) \nonumber \\
		&= \mathbb{E}_{\omega_{i,j} (t), \ \omega_{x,y} (t)} \left( {rsrp}_{m}(t) \right) \\
		&= 2P\sum_{i=1}^{N_{V}} \sum_{j=1}^{N_{H}} \sum_{i^{\prime}=1}^{N_{V}} \sum_{j^{\prime}=1}^{N_{H}} \sqrt{\alpha_{i, j} (t)} \sqrt{\alpha_{i^{\prime}}, j^{\prime}(t)}  g_{i, j} g_{i^{\prime}, j^{\prime}} \nonumber \\ & \qquad \qquad \times \mathbb{E} \Bigg[ \sum_{x, y} \cos   \left(\psi_{i, j, x, y}^{(m)} + \omega_{i, j} + \omega_{x, y} \right)
		 \nonumber \\ & \qquad \qquad \qquad \  \sum_{x^{\prime}, y^{\prime}} \cos \left(\psi_{i^{\prime}, j^{\prime}, x^{\prime}, y^{\prime}}^{(m)}+\omega_{i^{\prime}, j^{\prime}} +\omega_{x^{\prime}, y^{\prime}} \right)\Bigg]  \label{1800}.	
	\end{align}
}	
		
It is inferred from the distribution of $\omega_{i, j} $ that 
{\small
	\begin{align}
		&\mathbb{E}\Bigg[\sum_{x, y} \cos \left(\psi_{i, j, x, y}^{(m)}+\left(\omega_{i, j} +\omega_{x, y} \right)\right)   \nonumber \\ &\quad 
		 \sum_{x^{\prime}, y^{\prime}} \cos  \left(\psi_{i^{\prime}, j^{\prime}, x^{\prime}, y^{\prime}}^{(m)}+\left(\omega_{i^{\prime}, j^{\prime}} +\omega_{x^{\prime}, y^{\prime}} \right)\right)\Bigg]=0, 
	\end{align}
}
when $(i, j) \neq\left(i^{\prime}, j^{\prime}\right)$. As a result,	

{\small
	\begin{align}\label{17}
		&{RSRP}_{m}(t) \nonumber \\
		&= 2P \sum_{i=1}^{N_{V}} \sum_{j=1}^{N_{H}} \alpha_{i, j} (t) g_{i, j}^{2} \mathbb{E}\left[\left(\sum_{x, y} \cos \left(\psi_{i, j, x, y}^{(m)}+\omega_{i, j} +\omega_{x, y}\! \right)\!\right)^{2}\!\right] \\
		&= P \sum_{i=1}^{N_{V}} \sum_{j=1}^{N_{H}} \alpha_{i, j} (t) g_{i, j}^{2} \nonumber\\
		& \qquad \times \sum_{x, y} \sum_{x^{\prime}, y^{\prime}}  \mathbb{E}\left[\cos \left(\psi_{i, j, x, y}^{(m)}+\omega_{x, y} -\psi_{i, j, x^{\prime}, y^{\prime}}^{(m)}-\omega_{x^{\prime}, y^{\prime}} \right)\right]. \label{168}	
	\end{align}
}

%It   is inferred from the distribution of $\omega_{x, y} $ that  
%{
%	\scriptsize
%	\begin{align}
%		& \mathbb{E}\left[\cos \left(\psi_{i, j, x, y}^{(m)}+\omega_{x, y} -\psi_{i, j, x^{\prime}, y^{\prime}}^{(m)}-\omega_{x^{\prime}, y^{\prime}} \right)\right] \nonumber\\ &\quad = \begin{cases}1, & \text { if }(x, y)=\left(x^{\prime}, y^{\prime}\right) \\ e^{-\sigma_{x,y}^{2}} \cos \left(\psi_{i, j, x, y}^{(m)}-\psi_{i, j, x^{\prime}, y^{\prime}}^{(m)}\right), & \text { if }(x, y) \neq\left(x^{\prime}, y^{\prime}\right)\end{cases}.
%	\end{align}
%}

At last, we can obtain
{\small
	\begin{align}
		& {RSRP}_{m}(t) \nonumber \\
		&= P   \sum_{i = 1}^{N_V} \sum_{j = 1}^{N_H} {\alpha_{i, j} }(t)  g_{i, j}^{2} \Bigg( N_{x} N_{y}\left( 1-e^{-\sigma^{2}} \right) \nonumber \\
		& \qquad \ + e^{-\sigma^{2}} \sum_{x,y}  \sum_{x^{\prime},y^{\prime}}  \cos \left(\psi_{i, j, x, y}^{(m)}-\psi_{i, j, x^{\prime}, y^{\prime}}^{(m)} \right) \Bigg) \label{eq: core} 
		\\ 
		&= \sum_{i = 1}^{N_V} \sum_{j = 1}^{N_H}  {\rm A}_{i, j}^{(m)}  {\alpha_{i, j} }(t) ,
	\end{align}}
where
{	\begin{align} \label{eq: 8}
		{\rm A}_{i, j}^{(m)}   & \triangleq P g_{i, j}^{2}\Bigg( N_{x} N_{y}\left(1-e^{-\sigma^{2}}\right) \nonumber \\ 
		& \qquad \qquad \ + e^{-\sigma^{2}} \sum_{x,y}  \sum_{x^{\prime},y^{\prime}} \cos \left(\psi_{i, j, x, y}^{(m)}-\psi_{i, j, x^{\prime}, y^{\prime}}^{(m)} \right) \Bigg).
	\end{align}
}

}
\flushend
\bibliography{References}
\vfill
\end{document}